\DeclareMathAlphabet{\mathbbm}{U}{bbold}{m}{n}
\newcommand{\grad}{\operatorname{grad}}
\newcommand{\N}{\mathbb{N}}
\newcommand{\R}{\mathbb{R}}
\renewcommand{\tilde}{\widetilde}
\newcommand{\Sph}{\mathbb{S}}
\newcommand{\Int}{\operatorname{Int}}
\let\originalleft\left
\let\originalright\right
\renewcommand{\left}{\mathopen{}\mathclose\bgroup\originalleft}
\renewcommand{\right}{\aftergroup\egroup\originalright}
\newlist{thmlist}{enumerate}{1}
\setlist[thmlist]{label=(\roman{thmlisti}), ref=\thetheorem(\roman{thmlisti}), noitemsep}
\Crefname{theorem}{Theorem}{Theorems}
\Crefname{listthm}{Theorem}{Theorems}
\newtheorem{theorem}{Theorem}[section]
\newtheorem{lemma}[theorem]{Lemma}
\newtheorem{proposition}[theorem]{Proposition}
\newtheorem{corollary}[theorem]{Corollary}
\theoremstyle{definition}
\newtheorem{definition}[theorem]{Definition}
\newlist{deflist}{enumerate}{1}
\setlist[deflist]{label=(\roman{deflisti}), ref=\thedefinition(\roman{deflisti}), noitemsep}
\Crefname{definition}{Definition}{Definitions}
\Crefname{listdef}{Definition}{Definitions}
\newtheorem*{example}{Example}
\newtheorem*{remark}{Remark}
\newcommand{\symfootnote}[1]{%
\let\oldthefootnote=\thefootnote%
\stepcounter{mpfootnote}%
\addtocounter{footnote}{-1}%
\renewcommand{\thefootnote}{\fnsymbol{mpfootnote}}%
\footnote{#1}%
\let\thefootnote=\oldthefootnote%
}
\newcommand\blfootnote[1]{%
  \begingroup
  \renewcommand\thefootnote{}\footnote{#1}%
  \addtocounter{footnote}{-1}%
  \endgroup
}
\begin{document}

{\blfootnote{\textit{2020 Mathematics Subject Classification.} 83C30 (primary), 53C18, 53C50, 58K55, 83C57}}
{\blfootnote{\textit{Keywords and phrases.} Multipole moments, asymptotic flatness, stationarity, spacetimes, general relativity, conformal completions, spatial infinity.}}
\begin{center}
	{\huge \bf Multipole moments in stationary spacetimes}

    \vspace{\baselineskip}

    {\large Jorn van Voorthuizen\symfootnote{Mathematisches Institut, Universität zu Köln, Weyertal 86-90, D-50931 Köln, Germany. \\ Email: \href{mailto:jvoorthu@uni-koeln.de}{jvoorthu@uni-koeln.de}.}}
\end{center}

\vspace{2\baselineskip}

\begin{abstract}
    Multipole moments in general relativity serve as a powerful tool for characterising the gravitational field. In this paper, we review the construction of the Geroch--Hansen multipole moments for stationary asymptotically flat vacuum spacetimes. A particular focus is placed on the well-definedness of these moments, which hinges on the uniqueness of the one-point conformal completion in Geroch's asymptotic flatness definition. Based on Geroch's approach, we formulate and prove a revised uniqueness result, thereby filling in some gaps in the original approach. Uniqueness holds up to certain conformal transformations, and we discuss how the multipole moments behave under such transformations.
\end{abstract}

\section{Introduction}

Multipole moments encode the angular dependence of certain fields (e.g., electromagnetic or gravitational fields). In stationary asymptotically flat vacuum spacetimes, the most important definitions for multipole moments are due to Geroch \cite{gerochMultipoleMomentsII1970} and Hansen \cite{hansenMultipoleMomentsStationary1974} in the 1970s and due to Thorne \cite{thorneMultipoleExpansionsGravitational1980} in 1980. The Geroch--Hansen formalism is constructed using a conformal completion and is inherently coordinate-invariant. On the other hand, Thorne's multipole moments are based on finding a suitable coordinate system. Surprisingly, both approaches are equivalent, as shown by Gürsel \cite{gurselMultipoleMomentsStationary1983} in 1983.

As already mentioned in the title, we restrict our attention to the class of stationary spacetimes. There exist definitions of multipole moments for spacetimes with arbitrary time dependence, but they often describe only linear perturbations rather than the exact gravitational field \cite{janisStructureGravitationalSources1965,lambMultipoleMomentsEinstein1966,thorneMultipoleExpansionsGravitational1980,willmerMultipoleMomentsGeneral1981}. An exception is the multipole moments due to Compère, Oliveri and Seraj \cite{Compere2018gravitational}, who used a coordinate approach in the same spirit as Thorne and some Noether charges. In stationary asymptotically flat spacetimes, the Geroch--Hansen and Thorne multipole moments also describe the full nonlinear picture. In their work, it was also assumed that the spacetime is a vacuum solution (without cosmological constant), but the formalisms have been generalised to other classes of solutions (e.g., electrovacuum \cite{simonMultipoleExpansionStationary1984}) and, recently, to spacetimes with generic matter fields \cite{mayersonGravitationalMultipolesGeneral2023}.

The main reason why the Geroch--Hansen and Thorne multipole moments are interesting, is that they characterise a spacetime. This was independently shown by Beig and Simon \cite{beigMultipoleExpansionStationary1981} and Kundu \cite{kunduAnalyticityStationaryGravitational1981}, both in 1981. That makes it meaningful to measure multipole moments and compare Einstein's theory of general relativity to alternative theories of gravity, provide characteristics for gravitational waves, and test results/conjectures in general relativity \cite{babakScienceSpacebasedInterferometer2017,barackUsingLISAExtrememassratio2007,cardosoTestingBlackHole2016,johannsenTestingNoHairTheorem2010,liGeneralizationRyanTheorem2008,ryanGravitationalWavesInspiral1995,ryanAccuracyEstimatingMultipole1997}.

In this paper, we review and discuss the Geroch--Hansen formalism for stationary asymptotically flat vacuum spacetimes. Along the way, we improve two existing results to fill in the gaps in the theory of multipole moments. These new results are \autoref{thm:uniqueness} and \autoref{cor:mms_conftransi0}. Next, we briefly discuss these results and the overall structure of this paper.

To define multipole moments, we use asymptotic flatness in the sense defined by Geroch \cite{gerochMultipoleMomentsII1970} (see \autoref{def:asympflatGeroch}). The idea is to perform a conformal completion on a three-dimensional Riemannian manifold by adding a single point $i^0$ at infinity. In stationary spacetimes, one can think of $i^0$ as spatial infinity. The multipole moments $P^k$, $k\in \N_{0}=\{0,1,2,\dots\}$, are tensors at this added point $i^0$. They are defined by a recursion relation of consecutive derivatives, much like how we compute the coefficients of a Taylor expansion of an analytic field. The definition of multipole moments can be found in \autoref{def:seqtensors}. In order for the multipole moments to be well-defined, the conformal completion needs to be uniquely determined by the spacetime. As claimed by Geroch, there is the freedom of a conformal transformation that acts as an isometry at $i^0$. However, the proof in \cite{gerochMultipoleMomentsII1970} is incorrect. The topology constructed in Geroch's proof is not a topology. In our main result, we formulate and prove a correct uniqueness result.

\begin{restatable*}{theorem}{uniqueness}\label{thm:uniqueness}
    Let $(S,h)$ be a three-dimensional Riemannian manifold and let $K\subseteq S$ be a closed subset. Suppose there is a homeomorphism $\varphi\colon S\setminus \Int{K}\to \R^3\setminus \mathbb{B}^3$ which restricts to a diffeomorphism between $S\setminus K$ and $\R^3\setminus \overline{\mathbb{B}}^3$. Suppose $(S,h)$ is asymptotically flat and $\left(\tilde{S},\tilde{h}\right)$ is a conformal completion with a conformal factor $\Omega\in C^2\left(\tilde{S}\right)$ as in \autoref{def:asympflatGeroch} such that $\tilde{S}\setminus \Int{K}$ is compact, then $\left(\tilde{S},\tilde{h}\right)$ is unique up to conformal transformations with conformal factor $1$ at $i^0$.
\end{restatable*}

Here, $\mathbb{B}^3\subseteq \R^3$ denotes the open unit ball. We briefly discuss this theorem. We should think of $K\subseteq S$ as a bounded subset, and then it is not unreasonable to expect we can take $K$ such that $S\setminus K$ is diffeomorphic to $\R^3\setminus \overline{\mathbb{B}}^3$ as $S$ is asymptotically flat. The way we phrased it, via this homeomorphism $\varphi$, is mainly to distinguish the boundary of $S\setminus K$ from the unbounded region where we want to add the point $i^0$ at infinity. The added compactness assumption ensures that we have a one-point compactification of $\tilde{S}\setminus \Int{K}$, fixing the topology on this subset uniquely. This provides enough structure to ensure $i^0$ is added in a unique way up to some residual conformal transformations.

In \autoref{thm:uniqueness}, there is still freedom in the conformal completion, which is why we investigate how the multipole moments $P^k$ transform under these residual conformal transformations. The formula is given in \autoref{cor:mms_conftransi0}. This result was obtained by Beig \cite{beigMultipoleExpansionGeneral1981} for static spacetimes. There is no essential difference in our approach for stationary spacetimes, but we provide a detailed proof. \newline

\textbf{Outline.} In \autoref{sec:assumptions}, we discuss our assumptions on the spacetime: stationarity, asymptotic flatness, and a vacuum solution of the Einstein equations. Most notably, we discuss uniqueness of the conformal completion for three-dimensional manifolds in \autoref{sec:asympflat}. In \autoref{sec:mms}, we discuss the construction of multipole moments and, in particular, how they transform under the residual conformal transformations. We end with a brief discussion on the results and how they can be extended in \autoref{sec:discussion}. In \autoref{app:STF}, we recall some identities for symmetric trace-free tensors. \newline

\textbf{Notation.} We restrictively use $(M,g)$ to denote a spacetime, i.e., a connected four-dimensional Lorentzian manifold with a preferred time orientation. A stationary vector field is denoted by $\xi$, and once we fix a stationary vector field, we also assume it determines the time orientation. We denote a three-dimensional Riemannian manifold by $(S,h)$. The Levi-Civita connection on $(S,h)$ is denoted by $D$. The conformal completion of $(S,h)$ in the sense of \autoref{def:asympflatGeroch} (if it exists) is denoted by $\left(\tilde{S},\tilde{h}\right)$, and $\tilde{D}$ is the corresponding Levi-Civita connection. \newline

\textbf{Acknowledgements.} This work is based on the author's master's thesis at Radboud University. I want to thank my supervisors Béatrice Bonga and Annegret Burtscher for their supervision and their valuable comments on preliminary versions of this work. I also want to thank my PhD advisor Ioan M\u{a}rcu\textcommabelow{t} and the University of Cologne for their support in working on this paper.

\section{Geometric assumptions}\label{sec:assumptions}

The construction of the Geroch--Hansen multipole moments requires three assumptions on the spacetime. The spacetime must be stationary, asymptotically flat, and a solution of the Einstein equations in vacuum. The assumptions are not independent: asymptotic flatness depends on stationarity, and using the Einstein equations we define some potentials which must be asymptotically flat. In this section, we discuss the assumptions in this order.

\subsection{Stationarity}

Stationarity is used to reduce the setting for the construction of multipole moments from a $4$-dimensional spacetime to a $3$-dimensional Riemannian manifold. In this subsection, we discuss how to build this $3$-dimensional space.

\begin{definition}\label{def:stationary-complete}
    A spacetime $(M,g)$ is \emph{stationary} if there exists a complete timelike Killing vector field $\xi$ on $(M,g)$. I.e., $g(\xi,\xi)<0$ (timelike), $\mathcal{L}_\xi g=0$ (Killing) and the maximal integral curves of $\xi$ are defined on all of $\R$ (complete). We call $\xi$ a \emph{stationary vector field}.
\end{definition}

The most common definition of a stationary spacetime requires merely a timelike Killing vector field \cite{carrollSpacetimeGeometryIntroduction2019, harrisConformallyStationarySpacetimes1992, poissonRelativistToolkitMathematics2004}. I.e., the completeness assumption is dropped. In this case, the (maximal) flow of $\xi$ is a map $\theta\colon \mathcal{D}\to M$ with $\mathcal{D}\subseteq \R\times M$ such that $\theta^{(p)}(\cdot) = \theta(\cdot, p)$ is a (maximal) integral curve of $\xi$ starting at $p$. Let $M_t=\{p\in M : (t,p)\in \mathcal{D}\}$, then $M_t\subseteq M$ is an open subset and we can also endow $M_t$ with $g$. In that case, $\theta_t=M_t\to M_{-t}$ is an isometry because $\xi$ is a Killing vector field. It is easy to check that the flow $\theta$ induces an equivalence relation $\sim$ on $M$ given by $p\sim q$ if and only if there exists some $t\in \R$ such that $(t,p)\in \mathcal{D}$ and $q=\theta(t,p)$. The idea is to reduce the spacetime to a three-dimensional space by considering $M/\sim$. To this end, we also assume completeness of $\xi$. In that case $\mathcal{D}=\R\times M$ and $\theta$ is a smooth $\R$-action on $M$, simplifying quotients. The completeness assumption for a stationary vector field is sometimes also added in textbooks \cite{landsmanFoundationsGeneralRelativity2021, waldGeneralRelativity1984}.

If the stationary vector field is not complete, however, we may still take quotients by transforming $\xi$ into $f\xi$ for some smooth function $f$, such that $f\xi$ is complete. Then $f\xi$ is no longer a Killing vector field anymore. In that case, the quotient manifold is not ensured to be Hausdorff \cite{harrisConformallyStationarySpacetimes1992}. For the rest of this paper, $\xi$ is assumed to be complete.

\begin{definition}\label{def:obsspace}
    The \emph{observer space} of a stationary spacetime $(M,g,\xi)$ is the quotient space $S=M/\R$ under the flow of $\xi$.
\end{definition}

Using the quotient manifold theorem for quotients of Lie group actions, the observer space $S$ is a smooth manifold if the action is free and proper. To achieve this, we impose a causality condition. Harris \cite{harrisConformallyStationarySpacetimes1992} showed that it is sufficient to assume the weakest causality condition on the causal ladder \cite{minguzziLorentzianCausalityTheory2019}.

\begin{theorem}[{{\cite[Theorem 1]{harrisConformallyStationarySpacetimes1992}}}]
    Let $(M,g,\xi)$ be a non-totally vicious\footnote{A spacetime is non-totally vicious if there exists a point through which there is no closed timelike curve. In particular, chronological spacetimes are non-totally vicious. A stationary spacetime (with the completeness assumption) is non-totally vicious if and only if it is chronological \cite[Corollary 4.27]{minguzziLorentzianCausalityTheory2019}.}, stationary spacetime with observer space $S$, then there is a unique smooth structure on $S$ such that the projection $\pi\colon M\to S$ is a surjective smooth submersion.
\end{theorem}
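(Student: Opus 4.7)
The plan is to realise $S=M/\R$ as the orbit space of the smooth $\R$-action $\theta\colon \R\times M\to M$ given by the flow of $\xi$, and then invoke the quotient manifold theorem for Lie group actions (e.g.\ Lee, \emph{Introduction to Smooth Manifolds}, Thm.\ 21.10): once the action is known to be smooth, free, and proper, there is a unique smooth structure on $S$ making $\pi\colon M\to S$ a surjective smooth submersion, and uniqueness of this smooth structure among those making $\pi$ a submersion follows from the usual universal property of submersions (the identity map $S\to S$ between the two candidate structures is smooth in both directions by the characterisation of smooth maps out of $\pi$).

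Smoothness of $\theta$ as a Lie group action is immediate from the completeness hypothesis on $\xi$, together with smooth dependence of flows on initial data. Freeness requires two facts: that $\xi$ has no zeros, and that no integral curve of $\xi$ is periodic. The first is automatic because $\xi$ is timelike and hence nowhere vanishing. The second is precisely the nonexistence of closed timelike curves through any point, i.e.\ chronology, which by the footnote is equivalent to non-total viciousness under the stationarity and completeness assumptions. Hence the action is free.

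The main obstacle is properness of the evaluation map $\Theta\colon \R\times M\to M\times M$, $(t,p)\mapsto(p,\theta_t p)$. I would argue by contradiction: take a sequence $(t_n,p_n)\in \R\times M$ with $(p_n,\theta_{t_n}p_n)\to (p,q)$ and suppose $|t_n|\to\infty$; passing to a subsequence, assume $t_n\to +\infty$. Since $\xi$ is everywhere timelike and future-directed (by the fixed time orientation), the integral-curve segment $s\mapsto \theta_s(p_n)$ from $p_n$ to $\theta_{t_n}p_n$ is a future-directed timelike curve, so $\theta_{t_n}p_n \in I^+(p_n)$ for all $n$. Take a small convex neighbourhood $U$ of $p$ and, using that the chronology relation is open and that $\xi$ is timelike also at $q$, find a short future-directed timelike segment from $q$ back to some point in $U$ close to $p$; concatenating with the long timelike segments from $p_n$ to $\theta_{t_n}p_n$ yields, for large $n$, a closed timelike curve through a point near $p$. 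This contradicts chronology, so $\{t_n\}$ must be bounded; a subsequence then converges in $\R\times M$, which proves properness.

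With smoothness, freeness, and properness in hand, the quotient manifold theorem applies and delivers the stated conclusion. The only genuinely delicate step is the properness argument, where the interplay between the Killing flow, the timelike character of $\xi$, and the chronology condition has to be used carefully; everything else is routine given the setup.
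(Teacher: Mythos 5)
Your overall framework (realise $S$ as the orbit space of the flow of $\xi$, get smoothness of the action from completeness, freeness from the absence of zeros and of periodic orbits, then apply the quotient manifold theorem and the universal property of submersions for uniqueness) is consistent with how the paper frames the result, and the reduction of non-total viciousness to chronology via the cited equivalence is fine. Note, however, that the paper does not prove this theorem at all — it is quoted from Harris — so the burden of your proposal rests entirely on the one step you yourself call delicate, and that step has a genuine gap.

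The properness argument does not work as written. From $p_n\to p$, $t_n\to+\infty$ and $\theta_{t_n}p_n\to q$ you correctly get future-directed timelike segments from $p_n$ to $\theta_{t_n}p_n$, but to close these into a timelike loop you then invoke ``a short future-directed timelike segment from $q$ back to some point in $U$ close to $p$''. Nothing in the hypotheses provides such a segment: openness of the chronology relation and timelikeness of $\xi$ at $q$ only produce timelike curves emanating forward from points near $q$; they do not make $q$ chronologically precede any point near $p$. What the failure of properness gives you is only that $q$ is an accumulation point of far-future flow images of points near $p$ (roughly $q\in\overline{I^+(p)}$ after small adjustments), and in a chronological spacetime a point may perfectly well lie in the closure of $I^+(p)$ without any causality violation — so the contradiction never materialises unless you additionally know something like $p\in\overline{I^+(q)}$, which is exactly what is missing. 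Deriving a contradiction (or, equivalently, establishing that the quotient is a Hausdorff manifold) from this accumulation behaviour is the substantive content of Harris's theorem and requires a genuinely global causal argument exploiting that $\theta_t$ preserves the chronology relation; it is not a two-line consequence of openness of $I^+$. As it stands, your sketch proves freeness and smoothness but assumes the hard part; either supply a real argument for properness or follow Harris's construction of the quotient structure directly.
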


We want to describe the spacetime in terms of the observer space. For example, we want to describe the Einstein equations as equations on $S$. To do so, we have to understand tensor fields on $S$.

\begin{proposition}[{{\cite[Appendix]{gerochMethodGeneratingSolutions1971}}}]
    Let $S$ be the observer space of a stationary spacetime $(M,g,\xi)$. There is a $C^\infty(S)$-module isomorphism between $\Gamma\left(T^{(k,l)}S\right)$ and the tensor fields $T\in\Gamma\left(T^{(k,l)}M\right)$ such that $\mathcal{L}_\xi T =0$ and all possible contractions between $T$ and $\xi$ vanish.
\end{proposition}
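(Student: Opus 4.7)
The plan is to use the metric $g$ to split $TM$ into a vertical and a horizontal distribution, then show that bi-horizontal $\xi$-invariant tensors on $M$ are, at each point, exactly the tensors on $S$ transported through the splitting.

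First I would set up the splitting. Because $\xi$ is nowhere vanishing and timelike, the vertical distribution $V=\mathrm{span}(\xi)$ is transverse to its $g$-orthogonal complement $H=\xi^\perp$, giving a direct-sum decomposition $TM=V\oplus H$. Since $\ker(d\pi)=V$, the restriction $d\pi|_H\colon H\to\pi^*TS$ is a vector-bundle isomorphism. The Killing condition $\mathcal{L}_\xi g=0$ implies that this splitting is invariant under the flow of $\xi$: the flow preserves both $V$ and $H$, and intertwines $d\pi|_H$ with the identity on $TS$. This invariance is the key geometric fact that will let the construction descend to the quotient.

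Second, I would define the forward map $\Phi\colon\Gamma(T^{(k,l)}S)\to\Gamma(T^{(k,l)}M)$. For $T\in\Gamma(T^{(k,l)}S)$, I set
\[
(\Phi T)_p(\alpha_1,\dots,\alpha_k,v_1,\dots,v_l)=T_{\pi(p)}\bigl(\bar{\alpha}_1,\dots,\bar{\alpha}_k,d\pi(v_1),\dots,d\pi(v_l)\bigr),
\]
where $\bar{\alpha}\in T_{\pi(p)}^*S$ is the unique covector with $\alpha|_{H_p}=\bar{\alpha}\circ d\pi|_{H_p}$. Smoothness follows from smoothness of the distributions and $d\pi$. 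I would then verify the required properties: $d\pi(\xi)=0$ gives that contractions of $\Phi T$ with $\xi$ in any covariant slot vanish; $\xi^\flat|_{H_p}=0$ (since $\xi\perp H_p$) gives $\overline{\xi^\flat}=0$, so contractions of $\Phi T$ with $\xi^\flat$ in any contravariant slot vanish; and the flow-invariance of $H$ and of $d\pi|_H$ yields $\mathcal{L}_\xi\Phi T=0$.

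Third, I would define the inverse map $\Psi$ and prove it is well defined. Given a bi-horizontal, $\xi$-invariant $\tilde{T}\in\Gamma(T^{(k,l)}M)$, for $q\in S$ pick any $p\in\pi^{-1}(q)$, lift vectors $v_j\in T_qS$ to their horizontal lifts $\tilde{v}_j\in H_p$, and lift covectors $\alpha_i\in T_q^*S$ to $\tilde{\alpha}_i\in T_p^*M$ by $\tilde{\alpha}_i=\alpha_i\circ d\pi|_{H_p}$ extended by zero on $V_p$; then set $(\Psi\tilde{T})_q(\alpha_1,\dots,v_l)=\tilde{T}_p(\tilde{\alpha}_1,\dots,\tilde{v}_l)$. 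The main obstacle is independence of the chosen preimage $p$: two preimages differ by the flow $\theta_t$, and I would use that $\theta_t$ is an isometry (hence maps $H_p$ to $H_{\theta_t(p)}$ and preserves the lifts) together with $\mathcal{L}_\xi\tilde{T}=0$ to conclude the value is unchanged. That $\Psi$ lands in smooth tensor fields on $S$ can be read off in coordinates adapted to the splitting, e.g.\ local coordinates $(t,x^1,x^2,x^3)$ with $\xi=\partial_t$.

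Finally, I would check that $\Phi$ and $\Psi$ are mutually inverse and $C^\infty(S)$-linear. In the adapted basis $(\xi,e_1,e_2,e_3)$ with $e_i\in H_p$, the bi-horizontality conditions state precisely that the only non-vanishing components of $\tilde{T}$ carry indices in $\{1,2,3\}$, and these components coincide under $\Phi$ and $\Psi$ with the components of the corresponding tensor on $S$ in the basis $(d\pi(e_i))$. The identification $C^\infty(S)\cong\{f\in C^\infty(M):\mathcal{L}_\xi f=0\}$ via $\pi^*$ makes the $C^\infty(S)$-linearity statement meaningful, and both $\Phi$ and $\Psi$ manifestly intertwine multiplication by $\xi$-invariant functions.
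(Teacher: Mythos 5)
Your argument is correct, and it is essentially the standard one: the paper itself offers no proof but cites Geroch's appendix, whose construction rests on exactly the orthogonal splitting you use (your decomposition $TM=V\oplus H$ with $H=\xi^{\perp}$ is the geometric form of Geroch's projector $h^a{}_b=\delta^a{}_b+\lambda^{-1}\xi^a\xi_b$), together with flow-invariance to descend to the quotient. The only point worth tightening is the smoothness of $\Psi\tilde{T}$, which is cleanest via local sections of the surjective submersion $\pi$ rather than flow-box coordinates, since the latter need not directly project to charts on $S$.
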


The first example of a tensor field on $S$ is the scalar field
\begin{equation}\label{eq:lambda}
    \lambda=-g(\xi,\xi).
\end{equation}
The observer space is a $3$-dimensional manifold, and can be equipped with a suitable Riemannian metric. Seen as a tensor on $M$, it is given by
\begin{equation}\label{eq:metric}
	h=\lambda g+\xi^\flat \otimes \xi^\flat,
\end{equation}
where $\xi^\flat$ is the one-form on $M$ given by $\xi^\flat(X)=g(\xi,X)$ for vector fields $X\in \mathfrak{X}(M)$. The metric \eqref{eq:metric} yields a convenient form for the Einstein equations expressed on $S$ \cite{gerochMethodGeneratingSolutions1971}.\footnote{In other situations, one considers the conformal metrics $\lambda^{-1}h$ and $\lambda^{-2}h$. Working with $\lambda^{-1}h$ can be convenient because it turns $\pi\colon M\to S$ into a pseudo-Riemannian submersion. If $(M,g)$ is globally hyperbolic, the metric $\lambda^{-2}h$ turns $S$ into a complete Riemannian manifold \cite[Theorem 8]{garfinkleRicciFalloffStatic1997}.} Thirdly, the \textit{twist one-form} is given by
\begin{equation}\label{eq:twistoneform}
    \omega = - *\left(\xi^\flat\wedge d\xi^\flat\right),
\end{equation}
and reduces to a one-form on $S$. The exterior derivative of $\omega$ is \cite[p. 164]{waldGeneralRelativity1984}
\begin{equation}\label{eq:exderomega}
    d\omega = -2*\left(\xi^\flat \wedge i_\xi Rc\right),
\end{equation}
where $Rc$ is the Ricci tensor for $(M,g)$ and $i_\xi Rc$ is the one-form given by $\left(i_\xi Rc\right)(X)=Rc(\xi,X)$ for $X\in \mathfrak{X}(M)$. In particular, the twist one-form is closed for vacuum solutions of the Einstein equations (without cosmological constant).

\subsection{Asymptotic flatness}\label{sec:asympflat}

Geroch's notion of asymptotic flatness of Riemannian $3$-manifolds is inspired by Penrose's definition for asymptotic flatness at null infinity \cite{penroseAsymptoticPropertiesFields1963}. Typically, asymptotic flatness is defined by some decay conditions in a certain chart. Chru\'{s}ciel \cite{chruscielStructureSpatialInfinity1989} discussed how the definition below compares to more coordinate-dependent definitions.

\begin{definition}[Geroch's asymptotic flatness, \cite{gerochMultipoleMomentsII1970}]\label{def:asympflatGeroch}
    A three-dimensional Riemannian manifold $(S,h)$ is called \textit{asymptotically flat} if there exists a Riemannian manifold $\left(\tilde{S}, \tilde{h}\right)$ and a function $\Omega\in C^2\left(\tilde{S}\right)$ such that:
    \begin{deflist}
        \item $\tilde{S}=S\cup\{i^0\}$ and the inclusion $\iota\colon S\to \tilde{S}$ is a smooth embedding;
        \item $\iota^*\tilde{h} = (\iota^*\Omega)^2 h$;
        \item $\Omega(i^0)=0$, $d\Omega_{i^0}=0$ and $\tilde{D}(d\Omega)\rvert_{i^0} = 2\tilde{h}_{i^0}$, where $\tilde{D}$ denotes the Levi-Civita connection of $\left(\tilde{S},\tilde{h}\right)$.\footnote{Geroch only assumes that $\tilde{D}\left(\tilde{D}\Omega\right)\rvert_{i^0}$ is proportional to $\tilde{h}_{i^0}$. The factor $2$ is due to Hansen \cite{hansenMultipoleMomentsStationary1974}.}
    \end{deflist}
\end{definition}

Note that the one-point conformal completion $\left(\tilde{S},\tilde{h}\right)$ in \autoref{def:asympflatGeroch} may not be unique. In \cite{gerochMultipoleMomentsII1970}, it is stated that $\left(\tilde{S},\tilde{h}\right)$ is unique up to conformal transformations $\varphi$ with $\varphi(i^0)=1$, but the proof is incorrect.\footnote{The topology defined in the proof is not a topology.} We provide a corrected proof of the following refined statement, as already stated in the introduction.

\uniqueness

In asymptotically flat spaces, it is not unreasonable to expect that they are diffeomorphic to $\R^3$ minus a ball on some region. In order to apply this theorem, one should think of $K$ as ``neighborhoods'' of points spoiling compactness of $\tilde{S}$. For example, if there are multiple asymptotically flat ends, we take one of them and the others are included in $K$. Also singularities (black holes) are included in $K$. The space might not be asymptotically flat there, but they still spoil compactness.

The proof comes down to proving: uniqueness up to homeomorphism, uniqueness up to diffeomorphism, uniqueness up to conformal transformation, and finally showing that the conformal factor at $i^0$ is fixed. The first step is based on the following lemma.

\begin{lemma}\label{lem:topman}
    Let $S$ be a topological $3$-manifold and let $K\subseteq S$ be a closed subset such that $S\setminus \Int{K}$ is not compact. Then there is at most one topology on $\tilde{S}=S\cup\{i^0\}$ for some point $i^0\notin S$ such that $\tilde{S}$ is a topological $3$-manifold, the inclusion $\iota\colon S\hookrightarrow \tilde{S}$ is an embedding and $\tilde{S}\setminus \Int{K}$ is compact.
\end{lemma}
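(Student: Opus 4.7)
My plan is to suppose two topologies $\tau_1,\tau_2$ on $\tilde{S}$ satisfy the hypotheses and show they must coincide. The argument splits naturally into three steps. First, by invariance of domain applied to the embeddings $\iota\colon S\hookrightarrow(\tilde{S},\tau_j)$, the image $S$ is open in $\tilde{S}$ under both topologies and inherits the given manifold topology. In particular, $\Int{K}\subseteq S$ is also open in $\tilde{S}$ under both topologies, making $\tilde{S}\setminus\Int{K}$ closed, compact, and Hausdorff in both.

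Second, I would identify $\tilde{S}\setminus\Int{K}$ as the Alexandroff compactification of $S\setminus\Int{K}$. The set $S\setminus\Int{K}$ is open in $\tilde{S}\setminus\Int{K}$ (from openness of $S$) with complement $\{i^0\}$, and it must be dense there: otherwise $\{i^0\}$ would be open in $\tilde{S}\setminus\Int{K}$, making $S\setminus\Int{K}$ closed in a compact space and hence compact, contradicting the hypothesis. Since $S\setminus\Int{K}$ is locally compact Hausdorff (a closed subset of the locally compact Hausdorff manifold $S$), the uniqueness of the one-point compactification implies that the subspace topologies on $\tilde{S}\setminus\Int{K}$ induced by $\tau_1$ and $\tau_2$ coincide.

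Third, I would promote this agreement to all of $\tilde{S}$ by comparing neighbourhood filters at each point. At points of $S$ this is immediate. At $i^0$, I would exploit the $3$-manifold condition: any $\tau_j$-open neighbourhood $W\ni i^0$ has $W\cap(\tilde{S}\setminus\Int{K})$ equal to $\{i^0\}\cup((S\setminus\Int{K})\setminus C)$ for some compact $C\subseteq S\setminus\Int{K}$ by step two, and the local Euclidean condition forces $W$ to contain a chart $B\cong\mathbb{R}^3$ around $i^0$. Since $B\setminus\{i^0\}$ is an open subset of $S$ homeomorphic to $\mathbb{R}^3\setminus\{0\}$ whose unique end at $i^0$ is already singled out by the common topology of $\tilde{S}\setminus\Int{K}$, the $\tau_j$-neighbourhood basis at $i^0$ is pinned down independently of $j$. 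The hardest aspect of this step is making the last point precise: while the portion of any neighbourhood of $i^0$ lying in $\tilde{S}\setminus\Int{K}$ is canonically prescribed by the Alexandroff structure, one must carefully argue that the local Euclidean condition leaves no residual freedom in how neighbourhoods extend into $\Int{K}$ — essentially because the chart $B$ glues an $\mathbb{R}^3\setminus\{0\}$-type open subset of $S$ to $i^0$ along a distinguished end, and this gluing has no ambiguity once the end is fixed.
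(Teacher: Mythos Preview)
Your first two steps are correct and closely parallel the paper's argument: openness of $S$ in $\tilde{S}$ (the paper simply uses that $\{i^0\}$ is closed in a Hausdorff space, but invariance of domain works too), and uniqueness of the one-point compactification to pin down the subspace topology on $\tilde{S}\setminus\Int{K}$.

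The gap is in step 3. You have agreement of $\tau_1$ and $\tau_2$ on the open subset $S$ and on the \emph{closed} subset $\tilde{S}\setminus\Int{K}$, and you want to conclude agreement on $\tilde{S}$. But knowing the subspace topologies on an open set and on a closed set covering a space does not determine the ambient topology: if $U\ni i^0$ is $\tau_1$-open, then $U\cap S$ and $U\cap(\tilde{S}\setminus\Int{K})$ are open in their respective subspaces for both $\tau_j$, yet this does not force $U$ to be $\tau_2$-open, precisely because $\tilde{S}\setminus\Int{K}$ is not $\tau_2$-open. Your appeal to the ``unique end'' of a chart $B\setminus\{i^0\}\cong\R^3\setminus\{0\}$ does not close this gap: a priori $B\setminus\{i^0\}$ may weave through $\Int{K}$ and $S\setminus K$ in a complicated way, and nothing you have written explains why the $\tau_2$-neighbourhood filter at $i^0$ must contain $B$. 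The phrases ``the end is already singled out'' and ``this gluing has no ambiguity'' are restatements of what needs to be proved.

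What is missing is exactly the content of the paper's Claim~2: one must show that $i^0\notin\overline{K}$, i.e.\ that $\tilde{S}\setminus K$ is \emph{open}. The paper does this with a connectivity argument: take a Euclidean chart $U$ at $i^0$, pass to the connected component $V$ of $U\setminus\partial K$ containing $i^0$, and observe that $V\setminus\{i^0\}$ is a connected open subset of $S$ disjoint from $\partial K$, hence contained in either $\Int{K}$ or $S\setminus K$; the former would make $\{i^0\}$ open in $\tilde{S}\setminus\Int{K}$ and force $S\setminus\Int{K}$ to be compact, contradicting the hypothesis. Once $\tilde{S}\setminus K$ is known to be open, one has a genuine open cover $\{S,\tilde{S}\setminus K\}$ with fixed subspace topologies, and the topology of $\tilde{S}$ is then determined by the basis consisting of open subsets of $S$ together with open neighbourhoods of $i^0$ in $\tilde{S}\setminus K$ (the paper's Claims~3 and~4). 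Your ``end'' heuristic is pointing toward this connectivity argument, but you need to actually carry it out.
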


\begin{proof}
    Suppose there is a topological manifold $\tilde{S}$ satisfying the conditions in the lemma. The idea of the proof is to view $\tilde{S}$ as the union of $S$ and $\tilde{S}\setminus K$, which both have a fixed topology. We proceed with the proof via four claims.

    \textbf{Claim 1.} \textit{A subset $V\subseteq \tilde{S}\setminus K$ is open if and only if either $V\subseteq S\setminus K$ is open or $i^0\in V$ and $\left(\tilde{S}\setminus \Int{K}\right)\setminus V$ is compact in $S\setminus \Int{K}$.}

    \textit{Proof of Claim 1.} First, we want to identify the open subsets of $\tilde{S}\setminus \Int{K}$ in the same way. Since $S\setminus \Int{K}$ is a locally compact Hausdorff space, it has a unique one-point compactification up to homeomorphism \cite[Theorem 29.1]{munkresTopology2014}. In particular, a subset $U\subseteq \tilde{S}\setminus \Int{K}$ is open if and only if either $U\subseteq S\setminus \Int{K}$ is open or $i^0\in U$ and $\left(\tilde{S}\setminus \Int{K}\right)\setminus U$ is compact in $S\setminus \Int{K}$.

    ``$\implies$": Let $V\subseteq \tilde{S}\setminus K$ be an open subset, then there exists an open subset $U\subseteq \tilde{S}\setminus \Int{K}$ such that $V=U\cap \left(\tilde{S}\setminus K\right)$. For $U$, there are two possibilities. Firstly, if $U\subseteq S\setminus \Int{K}$, then $V\subseteq S\setminus K$ is an open open subset. Secondly, suppose $i^0\in U$ and $\left(\tilde{S}\setminus \Int{K}\right)\setminus U$ is compact in $S\setminus \Int{K}$. Then we have $i^0\in V$. Since $\partial K\subseteq \tilde{S}\setminus \Int{K}$ is compact, and
    \[V=U\cap \left(\tilde{S}\setminus K\right) = U\setminus \partial K,\]
    it follows that
    \[\left(\tilde{S}\setminus \Int{K}\right)\setminus V=\left(\tilde{S}\setminus \Int{K}\right)\setminus \left(U\setminus \partial K\right) = \left(\left(\tilde{S}\setminus \Int{K}\right)\setminus U\right)\cup \partial K\]
    is compact.

    ``$\impliedby$": There are two cases to consider. For the first case, let $V\subseteq S\setminus K$ be an open subset, then $V\subseteq S\setminus \Int{K}$ is also open. But then $V\subseteq \tilde{S}\setminus \Int{K}$ is also open, from which we can conclude that $V\subseteq \tilde{S}\setminus K$ is an open subset. For the second case, let $V\subseteq \tilde{S}\setminus K$ be a subset containing $i^0$ and such that $\bigl(\tilde{S}\setminus \Int{K}\bigr)\setminus V$ is compact in $S\setminus \Int{K}$. Then we have that $V\subseteq \tilde{S}\setminus \Int{K}$ is open, which also gives that $V\subseteq \tilde{S}\setminus K$ is open.

    \textbf{Claim 2.} \textit{The family $\{S,\tilde{S}\setminus K\}$ of subsets of $\tilde{S}$ is an open cover of $\tilde{S}$.}

    \textit{Proof of Claim 2.} Since $K\subseteq S$, it is clear that $\tilde{S}=S\cup \bigl(\tilde{S}\setminus K\bigr)$. The singleton $\left\{i^0\right\}$ is closed in $\tilde{S}$ because $\tilde{S}$ is Hausdorff by assumption, so $S=\tilde{S}\setminus \left\{i^0\right\}$ is an open subset of $\tilde{S}$.

    We are left to show that $\tilde{S}\setminus K$ is an open subset of $\tilde{S}$. Let $\overline{K}$ denote the closure of $K$ in $\tilde{S}$, then we are done if $\overline{K}=K$. Since $S\setminus K$ is an open subset of $S$ and $S$ is open in $\tilde{S}$, the set $S\setminus K$ is also open in $\tilde{S}$. Therefore, $\overline{K}\subseteq K\cup \left\{i^0\right\}$. Let $U$ be a coordinate domain for $\tilde{S}$ centered at $i^0$. Since $\partial K$ is compact in $S$, it is also compact in $\tilde{S}$ and $U\setminus \partial K$ is open in $\tilde{S}$. Let $V$ be the connected component of $U\setminus \partial K$ containing $i^0$. Then $V$ is homeomorphic to an open, connected subset of $\R^3$. Moreover, $W=V\setminus \left\{i^0\right\}$ is also an open, connected subset of $\tilde{S}$. Hence, the set $W$ is open and connected in $S$, and does not intersect $\partial K$. Then $W\cap \Int{K}$ and $W\cap (S\setminus K)$ form a disjoint open cover of $W$, so by connectivity only one of them can be nonempty. Suppose $W\cap (S\setminus K) = \emptyset$, then $V\cap \bigl(\tilde{S}\setminus K\bigr) = \left\{i^0\right\}$. By construction of the subspace topology, $V\cap \bigl(\tilde{S}\setminus K\bigr)=\left\{i^0\right\}$ is open in $\tilde{S}\setminus K$, so $S\setminus \Int{K}$ is compact by Claim 1. But $S\setminus \Int{K}$ is homeomorphic $\R^3\setminus \mathbb{B}^3$, which is not compact, so we arrive at a contradiction. Therefore, we must have $W\cap (S\setminus K)\neq \emptyset$, implying that $W\cap \Int{K}=\emptyset$. Hence, $V$ is an open neighborhood of $i^0$ in $\tilde{S}$ that does not intersect $K$. We conlude that $i^0\notin \overline{K}$ and $\overline{K}=K$.

    \textbf{Claim 3.} \textit{Let $\mathcal{T}$ be the topology of $S$ and let $\mathcal{T}_{i^0}$ be the collection of open neighborhoods of $i^0$ in $\tilde{S}\setminus K$, then $\mathcal{T}\cup \mathcal{T}_{i^0}$ is a basis for a topology on $\tilde{S}$.}

    \textit{Proof of Claim 3.} We have $S\in \mathcal{T}$ and $\tilde{S}\setminus K\in \mathcal{T}_{i^0}$, and these open subsets of $\tilde{S}$ cover $\tilde{S}$ by Claim 2. Therefore, each point in $\tilde{S}$ is contained in an element of $\mathcal{T}\cup \mathcal{T}_{i^0}$. By definition of a basis for a topology on $\tilde{S}$, we are only left to show that for any $x\in U\cap V$ with $U,V\in \mathcal{T}\cup \mathcal{T}_{i^0}$, there exists a subset $W\in \mathcal{T}\cup\mathcal{T}_{i^0}$ such that $x\in W\subseteq U\cap V$ \cite[Section 2.13]{munkresTopology2014}. In particular, it suffices to show that $\mathcal{T}\cup \mathcal{T}_{i^0}$ is closed under taking intersections.

    There are a few cases to consider, depending on whether $U$ and $V$ belong to $\mathcal{T}$ or $\mathcal{T}_{i^0}$. If $U,V\in \mathcal{T}$, then $U\cap V\in \mathcal{T}$ because a topology is closed under taking intersections. If $U\in \mathcal{T}$ and $V\in \mathcal{T}_{i^0}$, we have $U\cap V = U\cap \left(V\setminus \left\{i^0\right\}\right)$. Since $\left\{i^0\right\}$ is closed in $\tilde{S}\setminus K$, the set $V\setminus \left\{i^0\right\}$ must be open in $\tilde{S}\setminus K$, but then $V\setminus \left\{i^0\right\}$ is open in $S\setminus K$ by Claim 1, so it is open in $S$. Therefore, $U\cap V= U\cap \left(V\setminus \left\{i^0\right\}\right)\in \mathcal{T}$. Finally, if $U,V\in \mathcal{T}_{i^0}$, we have $i^0\in U\cap V$ and
    \[\bigl(\tilde{S}\setminus \Int{K}\bigr)\setminus (U\cap V) = \left(\bigl(\tilde{S}\setminus \Int{K}\bigr)\setminus U\right) \cup \left(\bigl(\tilde{S}\setminus \Int{K}\bigr)\setminus V\right),\]
    which is compact because it is a union of two compact sets. Hence, $U\cap V\in \mathcal{T}_{i^0}$ by Claim 1. We conclude that $\mathcal{T}\cup \mathcal{T}_{i^0}$ is closed under all possible intersections and it is a basis for a topology on $\tilde{S}$.

    \textbf{Claim 4.} \textit{The topology of $\tilde{S}$ is the topology generated by $\mathcal{T}\cup \mathcal{T}_{i^0}$.}

    \textit{Proof of Claim 4.} Let $\tilde{\mathcal{T}}$ be the topology of $\tilde{S}$. The collection $\mathcal{T}\cup \mathcal{T}_{i^0}$ of subsets of $\tilde{S}$ consists of subsets that are either open in $S$ or in $\tilde{S}\setminus K$. Since $S$ and $\tilde{S}\setminus K$ are open in $\tilde{S}$ by Claim 2, these subsets must also be open in $\tilde{S}$. Hence, $\mathcal{T}\cup \mathcal{T}_{i^0}\subseteq \tilde{\mathcal{T}}$, from which we conclude that the topology generated by $\mathcal{T}\cup \mathcal{T}_{i^0}$ must be contained in $\tilde{\mathcal{T}}$.

    Conversely, let $U\in \tilde{\mathcal{T}}$. If $i^0\notin U$, then we have $U=U\cap S\in \mathcal{T}$. If $i^0\in U$, then $U\cap S \in \mathcal{T}$ and $U\cap \bigl(\tilde{S}\setminus K\bigr)\in \mathcal{T}_{i^0}$ because $S, \tilde{S}\setminus K \subseteq \tilde{S}$ are open. But we also have
    \[U=(U\cap S)\cup \left(U\cap \bigl(\tilde{S}\setminus K\bigr)\right),\]
    so $U$ is contained in the topology generated by $\mathcal{T}\cup \mathcal{T}_{i^0}$. Hence, $\tilde{\mathcal{T}}$ equals the topology generated by $\mathcal{T}\cup \mathcal{T}_{i^0}$.

    To conclude the proof of the lemma, Claim 1 fixes $\mathcal{T}_{i^0}$ as the collection of subsets $V\subseteq \tilde{S}\setminus K$ such that $i^0\in V$ and $\bigl(\tilde{S}\setminus \Int{K}\bigr)\setminus V$ is compact in $S\setminus \Int{K}$. Since the topology on $S$ and the subset $K$ are given, it fixes both $\mathcal{T}$ and $\mathcal{T}_{i^0}$. The topology on $\tilde{S}$ is fixed by Claim 4.
\end{proof}

\begin{proof}[Proof of \autoref{thm:uniqueness}]
    \autoref{lem:topman} tells us that if $(S,h)$ is asymptotically flat, and $\left(\tilde{S},\tilde{h}\right)$ and $\Omega$ are as in \autoref{def:asympflatGeroch}, then there is a unique topology on $\tilde{S}$ such that $\tilde{S}$ is a topological $3$-manifold, $\iota\colon S\hookrightarrow \tilde{S}$ is an embedding, and $\tilde{S}\setminus \Int{K}$ is compact. Moise's theorem \cite{moiseAffineStructures3Manifolds1952} tells us that every topological $3$-manifold admits, up to diffeomorphism, a unique smooth structure. Asymptotic flatness implies that any two smooth structures on $\tilde{S}$ must agree on $S\subseteq \tilde{S}$, so that there is a diffeomorphism preserving $i^0$. Hence, $\tilde{S}$ is unique up to diffeomorphisms preserving $i^0$. We are left to show uniqueness up to conformal transformations and to fix the conformal factor at $i^0$.

    \textbf{Uniqueness of $\bigl(\tilde{S}, \tilde{h}\bigr)$ up to conformal transformations.} Assume that we have two metrics $\tilde{h}_1$ and $\tilde{h}_2$ on $\tilde{S}$ with conformal factors $\Omega_1$ and $\Omega_2$, respectively, satisfying the conditions (i) and (ii) of \autoref{def:asympflatGeroch}. On $S$, we have $h=\iota^*\left(\Omega_1^{-2}\tilde{h}_1\right) = \iota^*\left(\Omega_2^{-2}\tilde{h}_2\right)$. The functions $\Omega_1$ and $\Omega_2$ are smooth and nonvanishing on $S$, so $\alpha= \Omega_2/\Omega_1$ is a well-defined, smooth function on $S$. Moreover, $\tilde{h}_2= \left(\Omega_2/\Omega_1\right)^2\tilde{h}_1 = \alpha^2\tilde{h}_1$ on $S$. It remains to extend this property to $\tilde{S}=S\cup\left\{i^0\right\}$. Let $\left(E_1, E_2, E_3\right)$ be an orthonormal frame on an open neighborhood $U$ of $i^0$ with respect to $\tilde{h}_1$. Then we have
    \[\alpha^2 = \alpha^2\tilde{h}_1\left(E_1, E_1\right) = \tilde{h}_2\left(E_1,E_1\right),\]
    on $U\setminus \left\{i^0\right\}$. The right-hand side is a smooth, (strictly) positive function on $U$, so $\alpha^2$ also extends smoothly to $i^0$ with a positive value. Therefore, $\alpha$ also extends to a smooth, nonvanishing function on $\tilde{S}$. By continuity, we must have $\tilde{h}_2 = \alpha^2 \tilde{h}_1$ on all of $\tilde{S}$, establishing uniqueness up to conformal transformations.

    \textbf{Uniqueness of the conformal factor at $i^0$.} Let us compare the two metrics and conformal factors in light of condition (iii) of \autoref{def:asympflatGeroch}. By the previous paragraph, we have $\Omega_2 = \alpha \Omega_1$ for some nonvanishing, smooth function $\alpha$ on $\tilde{S}$. Let $\tilde{D}_i$ denote the Levi-Civita connection with respect to $\tilde{h}_i$, for $i=1,2$. Then the relation for the Levi-Civita connection between conformal metrics \cite[Proposition 7.29]{leeIntroductionRiemannianManifolds2018} gives
    \[\tilde{D}_2\left(d \Omega_2\right) = \tilde{D}_1\left(d \Omega_2\right) - \alpha^{-1} \left(d \Omega_2 \otimes d\alpha  +d\alpha\otimes d\Omega_2\right) + \alpha^{-1} d\Omega_2\left(\grad_{\tilde{h}_1}{\alpha}\right) \tilde{h}_1.\]
    When evaluating at $i^0$, the last three terms vanish because $\left.d\Omega_2\right\rvert_{i^0}=0$ by condition (iii) in \autoref{def:asympflatGeroch}. The first term is
    \[\tilde{D}_1\left(d\Omega_2\right) = \tilde{D}_1\left(d\left(\alpha\Omega_1\right)\right) = \tilde{D}_1\left(\alpha d\Omega_1 + \Omega_1 d\alpha\right) = \alpha\tilde{D}_1\left(d\Omega_1\right) + d\alpha \otimes d\Omega_1 + d\Omega_1 \otimes d\alpha + \Omega_1\tilde{D}_1\left(d\alpha\right),\]
    of which the last three terms also vanish at $i^0$ because of condition (iii) in \autoref{def:asympflatGeroch}. So,
    \[\left.\tilde{D}_2\left(d \Omega_2\right)\right\rvert_{i^0} = \alpha\left(i^0\right)\left.\tilde{D}_1\left(\alpha d\Omega_1\right)\right\rvert_{i^0},\]
    and applying condition (iii) in \autoref{def:asympflatGeroch} once more yields
    \[2 \left(\alpha\left(i^0\right)\right)^2 \left.\tilde{h}_1\right\rvert_{i^0}=2\left.\tilde{h}_2\right\rvert_{i^0} = \left.\tilde{D}_2\left(d \Omega_2\right)\right\rvert_{i^0} = \alpha\left(i^0\right)\left.\tilde{D}_1\left(d\Omega_1\right)\right\rvert_{i^0} = 2\alpha\left(i^0\right)\left.\tilde{h}_1\right\rvert_{i^0}.\]
    Since $\alpha$ is nonvanishing on $\tilde{S}$, this is only possible if $\alpha\left(i^0\right)=1$.
\end{proof}

Note that Geroch's original approach \cite{gerochMultipoleMomentsII1970} in proving uniqueness of the conformal completion does not involve Moise's theorem. He beautifully fixes the smooth structure by determining the smooth functions using conformal Laplacians. This approach depends less heavily on the dimension and is more constructive.

\subsection{Einstein equations}\label{sec:EEs}

Multipole moments are constructed from specific potentials, which themselves are derived from the Einstein equations. We discuss how to find these potentials on $S$, which are then extended to $\tilde{S}$ (from \autoref{def:asympflatGeroch}).

Recall that, in vacuum, the twist one-form \eqref{eq:twistoneform} is closed. Let us now restrict $S$ (and $M$ accordingly) such that it is diffeomorphic to $\mathbb{B}^3\setminus \{0\}$. If $S$ is asymptotically flat in the sense of \autoref{def:asympflatGeroch}, we can take a coordinate ball $B$ for $\tilde{S}$ centered at $i^0$, and we restrict $\tilde{S}$ to $B$ and $S$ to $B\setminus \{i^0\}$. Since we are interested in the local behaviour around $i^0$, we do not lose any information by doing so. In particular, the first de Rham cohomology of $S$ now vanishes because $\mathbb{B}^3\setminus \{0\}$ is homotopy equivalent to $\Sph^2$. Therefore, the twist one-form is not only closed, but also exact, i.e.,
\begin{equation}
    \omega=df,
\end{equation}
for some $f\in C^\infty(S)$, called the \textit{twist potential}. In vacuum, Hansen \cite{hansenMultipoleMomentsStationary1974} introduced the potentials
\begin{equation}\label{eq:masspotential}
    \phi_M =\frac{1-\lambda^2 -f^2}{4\lambda},
\end{equation}
and
\begin{equation}\label{eq:angmompotential}
    \phi_J = \frac{-f}{2\lambda},
\end{equation}
called the \textit{mass potential} and the \textit{angular momentum} (or \textit{spin}, or \textit{current}) \textit{potential}, respectively.

\begin{definition}\label{def:asympflatvac}
    A stationary spacetime $(M,g,\xi)$ is called an \textit{asymptotically flat vacuum solution} if the observer space $(S,h)$ is asymptotically flat, $(M,g)$ is a solution of the Einstein equations in vacuum, and the functions
    \begin{equation}\label{eq:phitilde}
        \tilde{\phi}_M=\Omega^{-\frac12}\phi_M \qquad \text{and} \qquad \tilde{\phi}_J=\Omega^{-\frac12}\phi_J
    \end{equation}
    extend to smooth functions on $\tilde{S}$.
\end{definition}

It is possible to relax the smoothness condition in the definition above to, for example, $C^2$. As discussed by Hansen \cite{hansenMultipoleMomentsStationary1974}, the regularity of the potentials can be improved using elliptic partial differential equations.

\section{Multipole moments}\label{sec:mms}

The coefficients of a Taylor expansion of an analytic field can be found by evaluating consecutive derivatives. Multipole moments are constructed in a similar way. We impose the three assumptions from \autoref{sec:assumptions}, culminating in a stationary asymptotically flat vacuum solution as in \autoref{def:asympflatvac}. This yields the transformed gravitational potentials $\tilde{\phi}_i$ on $\tilde{S}$ which must be smooth at $i^0$, and we can compute an asymptotic expansion of each $\tilde{\phi}_i$ at $i^0$.

\subsection{Definition}

Like in Newtonian gravity, multipole moments are symmetric trace-free tensors. Such tensors provide an alternative description for spherical harmonics \cite{poissonGravityNewtonianPostNewtonian2014}. A short review of such tensors is contained in \autoref{app:STF}.

\begin{definition}\label{def:seqtensors}
    Let $\left(S,h\right)$ be an asymptotically flat Riemannian manifold with $\left(\tilde{S},\tilde{h}\right)$ and $\Omega$ as in \autoref{def:asympflatGeroch}. Let $\phi$ be a smooth function on $S$ such that $\tilde{\phi}=\Omega^{-\frac12}\phi$ extends smoothly to $\tilde{S}=S\cup \{i^0\}$. The \textit{sequence $\left(P^k\right)_{k\in \N_0}$ of symmetric trace-free covariant $k$-tensor fields induced by $\phi$} on $\tilde{S}$ is inductively defined by $P^0=\tilde{\phi}$ and
    \begin{equation}\label{eq:GHmoments_step}
        P^{k+1} = \left(\tilde{D} P^k - \frac12 k (2k-1) P^{k-1} \otimes \tilde{Rc} \right)^{STF},
    \end{equation}
    for $k\in \N_0:=\N\cup \{0\}$, where $T^{STF}$ denotes taking the totally symmetric and trace-free part of $T$, and $\tilde{Rc}$ is the Ricci tensor on $\left(\tilde{S},\tilde{h}\right)$. The \textit{$2^k$-pole moment of $\phi$} is $\left.P^{k}\right\rvert_{i^0}$.
\end{definition}

In \eqref{eq:GHmoments_step}, we recognise taking derivatives of the transformed potentials $\tilde{\phi}$ using the Levi-Civita connection, but the correction term with the Ricci tensor may come as a surprise. The origin of this term comes from the transformation law under the residual conformal transformations and is discussed in \autoref{sec:transformationlaw}.

\autoref{def:seqtensors} does not depend on the Einstein equations. However, we use the Einstein equations to define the potentials $\phi$ to which we apply the construction. In vacuum, the potentials are the mass and angular momentum potentials $\phi_M$ and $\phi_J$ defined in \eqref{eq:masspotential} and \eqref{eq:angmompotential}.

\begin{definition}\label{def:GHmoments}
    Let $(S, h)$ be an asymptotically flat Riemannian manifold whose one-point conformal completion is $\bigl(\tilde{S}, \tilde{h}\bigr)$. Let $\phi_M$ and $\phi_J$ be the mass and angular momentum potential, respectively, and suppose $\tilde{\phi}_A=\Omega^{-\frac12}\phi_A$ extends to a smooth function on $\tilde{S}$ for $A=M,J$. The \textit{mass $2^k$-pole moment} is the $2^k$-pole moment of $\phi_M$ and is denoted by $M^k$, and the \textit{angular momentum $2^k$-pole moment} is the $2^k$-pole moment of $\phi_J$ and is denoted by $J^k$.
\end{definition}

\begin{example}
    The most important nontrivial example for which we can calculate multipole moments is the Kerr spacetime. However, even for the Kerr spacetime it is difficult to perform the calculation to arbitrary order. Therefore, we only state the result. Since the Kerr spacetime is axisymmetric, the calculations can be simplified greatly using algorithms by Fodor, Hoenselaers and Perjés \cite{fodorMultipoleMomentsAxisymmetric1989} and Bäckdahl and Herberthson \cite{backdahlExplicitMultipoleMoments2005}. We refer to the latter for a precise calculation of the multipole moments for the Kerr spacetime up to arbitrary order. As already noted by Hansen \cite{hansenMultipoleMomentsStationary1974}, there is an axis vector field $\tilde{Z}$ on $\tilde{S}$ such that the multipole moments $\left.P^k\right\rvert_{i^0}$ can be reconstructed from the constants
    \begin{equation}\label{eq:constantsmmsaxisym}
        p^k = \frac{1}{k!} P^k\left(\tilde{Z}, \dots, \tilde{Z}\right)
    \end{equation}
    via
    \begin{equation}
        P^k = (2k-1)!! p^k \left(\tilde{Z}^\flat\otimes \cdots \tilde{Z}^\flat\right)^{STF},
    \end{equation}
    where $\tilde{Z}^\flat = \tilde{h}\left(\tilde{Z},\cdot\right)$. In vacuum, we denote these constants $p^k$ by $m^k$ and $j^k$ for the mass and angular momentum multipole moments, respectively. The multipole moments of the Kerr spacetime turn out to be
    \begin{equation}\label{eq:mmsKerr}
        m^{2k} = (-1)^k m a^{2k}, \qquad m^{2k+1}=0, \qquad j^{2k}=0, \qquad j^{2k+1}=(-1)^k m a^{2k+1},
    \end{equation}
    for $k\in \N_0$. In particular, we have $m^0=m$,  $j^0=0$, $m^1=0$ and $j^1=ma$, precisely as one would expect for the mass and angular momentum.
\end{example}

\subsection{Transformation law}\label{sec:transformationlaw}

The recursively defined covariant tensor fields $P^k$ in \autoref{def:seqtensors} are defined on $\left(\tilde{S},\tilde{h}\right)$. According to \autoref{thm:uniqueness}, this space is unique up to certain conformal transformations of $\left(\tilde{S},\tilde{h}\right)$. Beig \cite{beigMultipoleExpansionGeneral1981} sketched for the first time how the multipole moments transform under conformal changes of $\tilde{h}$, in the setting of static vacuum spacetimes. Here, we present a more general result for stationary spacetimes which is based on Beig's original approach. In the proof, we utilise some identities for symmetric trace-free tensors. We refer to \autoref{app:STF} for these results.

\begin{theorem}\label{prop:mms_conftransglob}
    Let $\left(S,h\right)$ be an asymptotically flat Riemannian manifold with one-point extension $\left(\tilde{S},\tilde{h}_1\right)$ and conformal factor $\Omega_1$. Let $\alpha$ be a smooth positive function on $\tilde{S}$ with $\alpha\left(i^0\right)=1$ and let $\phi$ be a smooth function on $S$ such that $\tilde{\phi}_1=\Omega_1^{-\frac12} \phi$ extends to a smooth function on $\tilde{S}$. Let $\tilde{h}_2 = \alpha^2\tilde{h}_1$, and let $\left(P_1^k\right)$ and $\left(P_2^k\right)$ be the sequence of symmetric trace-free covariant $k$-tensor fields of $\phi$ of \autoref{def:seqtensors} with respect to $\tilde{h}_1$ and $\tilde{h}_2$, respectively. Then
    \begin{equation}\label{eq:mms_conftransglob}
        P_2^k = \sum_{m=0}^{k} \binom{k}{m} \frac{(2k-1)!!}{(2m-1)!!}(-2)^{-(k-m)} \alpha^{-\frac12 - (k-m)} \left(P_1^m \otimes d\alpha^{\otimes (k-m)}\right)^{STF},
    \end{equation}
    where $d\alpha^{\otimes n}=d\alpha\otimes \cdots \otimes d\alpha$, the tensor product of $n$ $d\alpha$'s and the double factorial is defined by $(-1)!!=1$ and $(2n-1)!!=(2n-1)(2n-3)\cdots 1$ for $n\in \N$.
\end{theorem}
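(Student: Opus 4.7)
The natural strategy, following Beig, is induction on $k$. The base case $k=0$ is forced by the structure of conformal completions: the identity $\iota^*(\Omega_i^{-2}\tilde{h}_i)=h$ combined with $\tilde{h}_2=\alpha^2\tilde{h}_1$ pins down $\Omega_2=\alpha\Omega_1$ on $S$ (and then on $\tilde{S}$ by continuity, as $\alpha(i^0)=1$), so $\tilde{\phi}_2 = \Omega_2^{-1/2}\phi = \alpha^{-1/2}\tilde{\phi}_1$, matching the $m=k=0$ term of \eqref{eq:mms_conftransglob}.

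For the inductive step, assume \eqref{eq:mms_conftransglob} for $P_2^{k-1}$ and $P_2^k$ and expand
\[P_2^{k+1} = \Bigl(\tilde{D}_2 P_2^k - \tfrac{1}{2}k(2k-1)\, P_2^{k-1}\otimes \tilde{Rc}_2\Bigr)^{STF}\]
by substituting the standard conformal transformation laws: $\tilde{D}_2=\tilde{D}_1+A_\alpha$ where $A_\alpha$ is an algebraic operator built from $\alpha^{-1}d\alpha$ and $\tilde{h}_1$, together with the three-dimensional conformal change of the Ricci tensor
\[\tilde{Rc}_2 = \tilde{Rc}_1 - \alpha^{-1}\tilde{D}_1 d\alpha + 2\alpha^{-2}\, d\alpha\otimes d\alpha - \alpha^{-1}\tilde{\Delta}_1\alpha \cdot \tilde{h}_1.\]
Applying the Leibniz rule to each summand of the inductive hypothesis produces three categories of terms: derivatives of the scalar prefactor $\alpha^{-1/2-(k-m)}$, which via $d(\alpha^{-1/2-(k-m)}) = -(\tfrac{1}{2}+k-m)\alpha^{-3/2-(k-m)}d\alpha$ yield an extra $d\alpha$ and keep the index $m$; derivatives of $(P_1^m\otimes d\alpha^{\otimes(k-m)})^{STF}$ along $\tilde{D}_1$, where the recursion for $P_1^{m+1}$ converts the STF part of $\tilde{D}_1 P_1^m$ into $P_1^{m+1}$ plus a Ricci correction, and the Leibniz rule on the $d\alpha^{\otimes(k-m)}$ factor produces $\tilde{D}_1 d\alpha$ pieces; and algebraic contributions from $A_\alpha$, whose $\tilde{h}_1$-type pieces are killed by the STF projection, leaving only multiples of $d\alpha$.

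The STF identities from \autoref{app:STF} then carry out the essential simplification: any $(T\otimes\tilde{h}_1)^{STF}$ with $T$ already trace-free vanishes, eliminating the $\tilde{\Delta}_1\alpha\cdot\tilde{h}_1$ piece of $\tilde{Rc}_2$ and the trace-type portions of $A_\alpha$; the surviving $\tilde{D}_1 d\alpha$ contributions from the Leibniz step must then cancel exactly against the $-\alpha^{-1}\tilde{D}_1 d\alpha$ piece of $\tilde{Rc}_2$ sitting inside $\tfrac{1}{2}k(2k-1)\,P_2^{k-1}\otimes\tilde{Rc}_2$, and likewise the $\tilde{Rc}_1$ contributions from the $P_1$-recursion and from $\tilde{Rc}_2$ must cancel entirely, since \eqref{eq:mms_conftransglob} contains no Ricci terms. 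The main obstacle is the combinatorial bookkeeping: verifying that the surviving coefficient of $\alpha^{-1/2-(k+1-m)}(P_1^m\otimes d\alpha^{\otimes(k+1-m)})^{STF}$ equals $\binom{k+1}{m}\frac{(2k+1)!!}{(2m-1)!!}(-2)^{-(k+1-m)}$. Contributions arrive from index $m$ in $P_2^k$ (prefactor differentiation and $A_\alpha$), from index $m{-}1$ in $P_2^k$ shifted up (via the $P_1$-recursion inside $\tilde{D}_1 P_2^k$), and from index $m$ in $P_2^{k-1}$ (via the $d\alpha\otimes d\alpha$ piece of $\tilde{Rc}_2$), so the check reduces to a Pascal-type identity among the coefficients $c_{k,m}=\binom{k}{m}\frac{(2k-1)!!}{(2m-1)!!}(-2)^{-(k-m)}$. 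The precise factor $\tfrac{1}{2}k(2k-1)$ in \eqref{eq:GHmoments_step} and the double-factorial step $(2k+1)!!/(2k-1)!!=2k+1$ are exactly what make these cancellations close, which both explains the appearance of this specific coefficient in the multipole recursion and confirms that the resulting conformal transformation law is indeed \eqref{eq:mms_conftransglob}.
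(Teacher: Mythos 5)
Your proposal follows essentially the same route as the paper's proof: induction on $k$ using the conformal transformation laws for the Levi-Civita connection and the Ricci tensor, the STF identities to discard all metric-trace terms, cancellation of the $\tilde{D}_1(d\alpha)$ contributions against the corresponding piece of $\tilde{Rc}_2$, absorption of the $\tilde{Rc}_1$ terms into the recursion defining $P_1^{m+1}$, and a final Pascal-type identity among the coefficients $\binom{k}{m}\frac{(2k-1)!!}{(2m-1)!!}(-2)^{-(k-m)}$ — all of which do close exactly as you predict. The only small omission is the base case $k=1$ (your two-step induction hypothesis at $k-1$ and $k$ formally requires both $k=0$ and $k=1$), but this is immediate from $k=0$ since the Ricci term in \eqref{eq:GHmoments_step} carries the factor $\frac12 k(2k-1)=0$ there.
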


\begin{remark}
    Note that it does not matter whether we take the (symmetric) trace-free part $\left(\cdot\right)^{STF}$ with respect to $\tilde{h}_1$ or $\tilde{h}_2$. A tensor is trace-free with respect to one of them if and only if it is with respect to the other. Alternatively, we can see this because replacing $h$ in \eqref{eq:STFpart} by $\tilde{h}_1$ and $\tilde{h}_2$ yield the same result as the factors $\alpha$ cancel each other.
\end{remark}

\begin{proof}
    We prove the result by induction. Let $\tilde{D}_1$ and $\tilde{D}_2$ denote the Levi-Civita connections and let $\tilde{Rc}_1$ and $\tilde{Rc}_2$ denote the Ricci tensors with respect to $\tilde{h}_1$ and $\tilde{h}_2$, respectively. Since $\Omega_2= \alpha\Omega_1$, we have $\tilde{\phi}_2= \Omega_2^{-\frac12} \phi = \tilde{\phi}_2 = \alpha^{-\frac12}\tilde{\phi}_1$, and this also extends to a smooth function on $\tilde{S}$ because $\alpha(i^0)=1$. Following \autoref{def:seqtensors}, we have $P_1^0 = \tilde{\phi}_1$ and $P_2^0=\tilde{\phi}_2$, so
    \[P_2^0 =\alpha^{-\frac12} P_1^0.\]
    Moreover,
    \[P_2^1= \tilde{D}_2 P_2^0 = d P_2^0 = \alpha^{-\frac12} d P_1^0 -\frac12\alpha^{-\frac32}P_1^0 d\alpha = \alpha^{-\frac12} P_1^1 - \frac12 \alpha^{-\frac32} P_1^0 d\alpha, \]
    proving \eqref{eq:mms_conftransglob} for $k=0,1$.

    Assume \eqref{eq:mms_conftransglob} is satisfied for $k-1$ and $k$ for some $k\in \N$. We want to calculate $P_2^{k+1}$ using \eqref{eq:GHmoments_step}, so we need $\tilde{D}_2 P^k_2$ and $P^{k-1}_2 \otimes \tilde{Rc}_2$. Under conformal transformations, the Levi-Civita connection on covariant $k$-tensor fields transforms as \cite[Proposition 7.29]{leeIntroductionRiemannianManifolds2018}
    \[\begin{split} \tilde{D}_2 P_2^k(X_1,\dots, X_{k+1}) &= \tilde{D}_1 P_2^k(X_1,\dots, X_{k+1}) - k \alpha^{-1}X_{k+1}(\alpha) P_2^k(X_1,\dots, X_{k}) \\&\qquad- \sum_{i=1}^k \alpha^{-1}X_i(\alpha) P_2^k(X_1,\dots, X_{i-1},X_{k+1},X_{i+1},\dots, X_k) \\&\qquad+ \sum_{i=1}^k \alpha^{-1} h(X_{k+1},X_i) P_2^k(X_1,\dots, X_{i-1},\grad_h{\alpha},X_{i+1},\dots, X_k).\end{split}\]
    When we take the symmetric trace-free part of $\tilde{D}_2 P_2^k$, the last summation vanishes, so
    \[\left(\tilde{D}_2 P_2^k\right)^{STF} = \left(\tilde{D}_1 P_2^k\right)^{STF} - 2k \alpha^{-1}\left(P_2^k\otimes d\alpha\right)^{STF}.\]
    By the induction hypothesis, this gives
    \begin{equation}\label{eq:D2P2k}
        \begin{split} \left(\tilde{D}_2 P_2^k\right)^{STF} &= \sum_{m=0}^{k} \binom{k}{m} \frac{(2k-1)!!}{(2m-1)!!}(-2)^{-(k-m)} \alpha^{-\frac12 - (k-m)}  \\&\qquad \cdot\left(\left(\tilde{D}_1 P_1^m \otimes d\alpha^{\otimes (k-m)}\right)^{STF}\right. \\&\qquad \qquad \left. + (k-m) \left(P_1^m \otimes \tilde{D}_1(d\alpha)\otimes d\alpha^{\otimes (k-1-m)}\right)^{STF} \right. \\&\qquad \qquad \left.-\frac12 \left(6k-2m+1\right)  \alpha^{-1} \left(P_1^m \otimes d\alpha^{\otimes (k+1-m)}\right)^{STF} \right),\end{split}
    \end{equation}
    where we utilised \eqref{eq:tensorprodSTF} and \eqref{eq:covdevSTF} to simplify the symmetric trace-free parts. The Ricci tensor transforms as \cite[Theorem 7.30]{leeIntroductionRiemannianManifolds2018}
    \[\tilde{Rc}_2= \tilde{Rc}_1  - \alpha^{-1}\tilde{D}_1\left(d\alpha\right) - \alpha^{-1} \left(\tilde{\Delta}_{\tilde{h}_1}\alpha\right) \tilde{h}_1 + 2 \alpha^{-2} d\alpha \otimes d\alpha,\]
    and taking the symmetric trace-free part gives
    \[\left(\tilde{Rc}_2\right)^{STF} = \left(\tilde{Rc}_1\right)^{STF} - \alpha^{-1} \left(\tilde{D}_1\left(d\alpha\right)\right)^{STF} + 2\alpha^{-2} \left(d\alpha\otimes d\alpha\right)^{STF},\]
    using \eqref{eq:tensorprodmetricSTF}. By the induction hypothesis for $k-1$,
    \begin{equation}\label{eq:P2k-1Rc}
        \begin{split}\left(P_2^{k-1}\otimes \tilde{Rc}_2\right)^{STF}&= \sum_{m=0}^{k-1} \binom{k-1}{m} \frac{(2k-3)!!}{(2m-1)!!}(-2)^{-(k-1-m)} \alpha^{-\frac12 - (k-1-m)}  \\&\qquad \cdot\left(\left(P_1^{m}\otimes \tilde{Rc}_1\otimes d\alpha^{\otimes(k-1-m)}\right)^{STF} \right.\\&\qquad\qquad\left.- \alpha^{-1}\left(P_1^m\otimes \tilde{D}_1(d\alpha)\otimes d\alpha^{\otimes(k-1-m)}\right)^{STF} \right.\\&\qquad\qquad\left.+2\alpha^{-2}\left(P_1^m\otimes d\alpha^{\otimes(k+1-m)}\right)^{STF} \right),\end{split}
    \end{equation}
    where we used \eqref{eq:tensorprodSTF} to simplify the expression. For $P_2^{k+1}$, following \eqref{eq:GHmoments_step}, we have
    \[ P_2^{k+1} = \left(\tilde{D}_2 P_2^k - \frac12 k(2k-1) P_2^{k-1}\otimes \tilde{Rc}_2\right)^{STF} =A+B+C,\]
    where, from \eqref{eq:D2P2k} and \eqref{eq:P2k-1Rc}, $A$ contains the terms with $\tilde{D}_1 P_1^m$ and $P_1^m\otimes \tilde{Rc}_1$, $B$ contains the terms with $\tilde{D}_1(d\alpha)$, and $C$ contains the other terms which are of the form $P_1^m\otimes d\alpha^{\otimes (k+1-m)}$. Rewriting a little bit easily shows that $B=0$ because for each $m=0,\dots, k-1$ the coefficients cancel:
    \[\binom{k}{m} \frac{(2k-1)!!}{(2m-1)!!} (-2)^{-(k-m)}(k-m) - \frac12 k(2k-1) \binom{k-1}{m} \frac{(2k-3)!!}{(2m-1)!!} (-2)^{-(k-m-1)}(-1)=0.\]
    For $A$, shifting the summation for the $\tilde{Rc}_1$-terms gives
    \[\begin{split} A&= \sum_{m=0}^k \binom{k}{m} \frac{(2k-1)!!}{(2m-1)!!} (-2)^{-(k-m)}\alpha^{-\frac12-(k-m)}\left(\tilde{D}_1 P_1^m \otimes d\alpha^{\otimes (k-m)}\right)^{STF} \\&\qquad-\frac12 k \sum_{m=0}^{k-1} \binom{k-1}{m} \frac{(2k-1)!!}{(2m-1)!!} (-2)^{-(k-1-m)} \alpha^{-\frac12 -(k-1-m)}\left(P_1^m \otimes \tilde{Rc}_1\otimes d\alpha^{\otimes (k-1-m)}\right)^{STF} \\&= \sum_{m=0}^k \binom{k}{m} \frac{(2k-1)!!}{(2m-1)!!} (-2)^{-(k-m)}\alpha^{-\frac12-(k-m)}\left(\tilde{D}_1 P_1^m \otimes d\alpha^{\otimes (k-m)}\right)^{STF} \\&\qquad-\frac12 k \sum_{m=1}^{k} \binom{k-1}{m-1} \frac{(2k-1)!!}{(2m-3)!!} (-2)^{-(k-m)} \alpha^{-\frac12 -(k-m)}\left(P_1^{m-1} \otimes \tilde{Rc}_1\otimes d\alpha^{\otimes (k-m)}\right)^{STF}.\end{split}\]
    Taking the summations together and exploiting \eqref{eq:tensorprodSTF} gives
    \[\begin{split} A &= \sum_{m=0}^l \binom{k}{m} \frac{(2k-1)!!}{(2m-1)!!} (-2)^{-(k-m)}\alpha^{-\frac12-(k-m)} \\&\qquad \cdot\left(\left(\tilde{D}_1 P_1^m - \frac12 m(2m-1) P_1^{m-1}\otimes\tilde{Rc}_1\right)^{STF}\otimes d\alpha^{\otimes (k-m)}\right)^{STF}\end{split}\]
    Using \eqref{eq:GHmoments_step} for $m$ and shifting the summation again, we find
    \[\begin{split} A&= \sum_{m=0}^k \binom{k}{m} \frac{(2k-1)!!}{(2m-1)!!} (-2)^{-(k-m)}\alpha^{-\frac12-(k-m)} \left(P_1^{m+1}\otimes d\alpha^{\otimes (k-m)}\right)^{STF} \\&= \sum_{m=1}^{k+1} \binom{k}{m-1} \frac{(2k-1)!!}{(2m-3)!!} (-2)^{-(k+1-m)}\alpha^{-\frac12-(k+1-m)} \left(P_1^{m}\otimes d\alpha^{\otimes (k+1-m)}\right)^{STF}. \end{split}\]
    Finally, for $C$, we have
    \[\begin{split} C&= \sum_{m=0}^k \binom{k}{m} \frac{(2k-1)!!}{(2m-1)!!} (-2)^{-(k+1-m)} \left(2(k+m)+1\right) \alpha^{-\frac12 - (k+1-m)}\left(P_1^m\otimes d\alpha^{\otimes(k+1-m)}\right)^{STF}. \end{split}\]
    We want to add $A$ and $C$, giving three type of terms: the $m=0$ term in $C$, the $m=k+1$ term in $A$ and the terms for $m=1,\dots, k$. For the latter, an easy calculation yields
    \[\binom{k}{m-1} (2m-1) + \binom{k}{m} (2(k+m)+1) = \binom{k+1}{m}(2k+1).\]
    Therefore,
    \[\begin{split} P_2^{l+1} &= A+C
    \\&= (2k+1)!! (-2)^{-(k+1)}\alpha^{-\frac12 - (k+1)}\left(P_1^0\otimes d\alpha^{\otimes(k+1)}\right)^{STF} + \alpha^{-\frac12} \left(P_1^{k+1}\right)^{STF} \\&\qquad +\sum_{m=1}^{k} \binom{k+1}{m} \frac{(2k+1)!!}{(2m-1)!!} (-2)^{-(k+1-m)} \alpha^{-\frac12 - (k+1-m)} \left(P_1^m\otimes d\alpha^{\otimes(k+1-m)}\right)^{STF} \\&=\sum_{m=0}^{k+1} \binom{k+1}{m} \frac{(2k+1)!!}{(2m-1)!!} (-2)^{-(k+1-m)} \alpha^{-\frac12 - (k+1-m)} \left(P_1^m\otimes d\alpha^{\otimes(k+1-m)}\right)^{STF},\end{split}\]
    proving \eqref{eq:mms_conftransglob} by induction.
\end{proof}

\begin{corollary}\label{cor:mms_conftransi0}
    In the setting of \autoref{prop:mms_conftransglob}, the multipole moments transform as
    \begin{equation}\label{eq:mms_conftransi0}
        \left.P_2^k\right\rvert_{i^0} = \sum_{m=0}^{k} \binom{k}{m} \frac{(2k-1)!!}{(2m-1)!!}(-2)^{-(k-m)}  \left(\left.P_1^m\right\rvert_{i^0} \otimes \left.d\alpha\right\rvert_{i^0}^{\otimes (k-m)}\right)^{STF}.
    \end{equation}
\end{corollary}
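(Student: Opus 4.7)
The plan is that this corollary follows immediately from \autoref{prop:mms_conftransglob} by pointwise evaluation at $i^0$, so the task is essentially to explain why the formula \eqref{eq:mms_conftransglob} collapses to \eqref{eq:mms_conftransi0} at that point. First I would recall that the symmetric trace-free projection $(\cdot)^{STF}$ is a pointwise algebraic operation on tensors, defined fiberwise via the metric; hence it commutes with evaluation at $i^0$, and the right-hand side of \eqref{eq:mms_conftransglob} evaluated at $i^0$ is just the sum of the pointwise projections of the tensors at $i^0$.

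Next, I would use the hypothesis $\alpha(i^0)=1$ (which by \autoref{thm:uniqueness} is precisely the residual freedom in the conformal completion) to simplify each term. The only prefactor in \eqref{eq:mms_conftransglob} depending on $\alpha$ is
\begin{equation*}
\alpha^{-\frac12-(k-m)},
\end{equation*}
and at $i^0$ this evaluates to $1^{-\frac12-(k-m)} = 1$ for every $m$. Every remaining factor in the $m$-th summand — the binomial coefficient, the ratio of double factorials, and the sign $(-2)^{-(k-m)}$ — is just a numerical constant independent of the point. Therefore, evaluating both sides of \eqref{eq:mms_conftransglob} at $i^0$ yields
\begin{equation*}
\left.P_2^k\right\rvert_{i^0} = \sum_{m=0}^{k} \binom{k}{m} \frac{(2k-1)!!}{(2m-1)!!}(-2)^{-(k-m)} \left.\left(P_1^m \otimes d\alpha^{\otimes (k-m)}\right)^{STF}\right\rvert_{i^0},
\end{equation*}
which, after pushing the restriction inside the pointwise operation $(\cdot)^{STF}$, is exactly \eqref{eq:mms_conftransi0}.

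There is no genuine obstacle here: the entire content of the corollary is already contained in \autoref{prop:mms_conftransglob}, and the passage to $i^0$ only uses the normalisation $\alpha(i^0)=1$ together with the pointwise nature of the STF projection. The main thing to be careful about is simply that one is allowed to commute restriction with $(\cdot)^{STF}$ and with the tensor product, which holds because $(\cdot)^{STF}$ is defined at each point and uses only the metric at that point.
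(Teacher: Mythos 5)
Your proposal is correct and matches the paper's proof, which likewise obtains \eqref{eq:mms_conftransi0} by evaluating \eqref{eq:mms_conftransglob} at $i^0$ and using $\alpha(i^0)=1$; the extra remarks about the pointwise nature of $(\cdot)^{STF}$ are a harmless elaboration of the same step.
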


\begin{proof}
    The result follows immediately from evaluating equation \eqref{eq:mms_conftransglob} at $i^0$ as $\alpha(i^0)=1$.
\end{proof}

Equation \eqref{eq:mms_conftransi0} determines how the multipole moments $\left.P^k\right\rvert_{i^0}$ behave under the residual conformal transformations. Viewing \eqref{eq:mms_conftransi0} as a perturbative expansion in $\left.d\alpha\right\rvert_{i^0}$, we see that the first-order correction of the $2^k$-multipole moment is proportional to $\mathcal{C}\left(\left.P_1^{k-1}\right\rvert_{i^0} \otimes \left.d\alpha\right\rvert_{i^0}\right)$, i.e., only depends on the $2^{k-1}$-multipole moment. Newtonian multipole moments behave in the same way when displacing the origin \cite{gerochMultipoleMomentsFlat1970,gerochMultipoleMomentsII1970}. At first, the $\tilde{Rc}$-term in the definition of the covariant tensors $P^k$ in \eqref{eq:GHmoments_step} might seem surprising as the coefficients in an expansion are usually obtained by differentiation. However, this term precisely cancels the correction term that is proportional to $\mathcal{C}\left(\left.P_1^{k-2}\right\vert_{i^0} \otimes \left.\tilde{D}_1(d\alpha)\right\rvert_{i^0}\right)$.

In Newtonian gravity, we often pick the origin such that it lies at the center of mass. That means, if the mass $2^k$-pole moments is the first one that is nonvanishing, then the mass $2^{k+1}$-pole moment does vanish. Assume that the mass of the system is nonvanishing, then the mass monopole moment $M^0$ is nonzero. We can apply a suitable conformal transformation such that the mass dipole moment vanishes. \autoref{cor:mms_conftransi0} yields
\[M^1_2 =M^1_1 -\frac12   M_0^1 \left.d\alpha\right\rvert_{i^0},\]
so we want to take $\alpha$ such that
\[\left.d\alpha\right\rvert_{i^0} = \frac{2}{M_0^1} M^1_1.\]
This is always possible and there is still some freedom left in $\alpha$. However, there is no freedom left in the multipole moments anymore. The multipole moments only change by $\left.d\alpha\right\rvert_{i^0}$, so choosing a conformal factor in such a way fixes the multipole moments.

\section{Discussion}\label{sec:discussion}

In this paper, we reviewed the construction of the Geroch--Hansen multipole moments and the assumptions on the spacetime for ensuring a well-defined concept. There are some areas where potential generalisations and extensions could be explored.

First, the Geroch--Hansen formalism is limited to stationary spacetimes, where we required the associated stationary vector field to be complete (\autoref{def:stationary-complete}). One might wonder what happens if we drop the completeness assumption. As discussed above \autoref{def:obsspace}, the observer space might fail to be Hausdorff. It would be interesting to check whether the completeness assumption is necessary or if it can be dropped.

Second, we defined asymptotic flatness for three-dimensional Riemannian manifolds in \autoref{def:asympflatGeroch}, whether or not they as observer spaces of stationary spacetimes. In principle, we could drop the dimensionality condition. For instance, the $n$-sphere serves as a one-point compactification of $\R^n$, suggesting parallels in higher-dimensional settings. While our proof of the uniqueness result in \autoref{thm:uniqueness} critically relies on three-dimensionality, as remarked below the proof of the theorem, Geroch's technique using the conformal Laplacian \cite{gerochMultipoleMomentsII1970} offers a way to generalise the result to Riemannian manifolds of dimensions $n>2$. This extends Geroch's asymptotic flatness condition to higher dimensions and one can broaden the applicability of the Geroch--Hansen formalism to higher dimensions.

Additionaly, the vacuum assumption could be relaxed. The Geroch--Hansen formalism can still be applied spacetimes with matter, but we have to apply \autoref{def:GHmoments} to different potentials. Simon \cite{simonMultipoleExpansionStationary1984} already constructed suitable potentials in electrovacuum. In that case, we do not only have mass and angular momentum potentials, but also electric and magnetic potentials, and the formalism should also be applied to the latter. These additional potentials yield so-called matter multipole moments. The mass and angular momentum potentials also had to be adapted because they are based on closedness of the twist one-form as discussed in \autoref{sec:EEs}, which is a result of vacuum assumption. Mayerson \cite{mayersonGravitationalMultipolesGeneral2023} recently constructed potentials for rather general solutions of the Einstein equations. However, his approach does not currently extend to matter potentials, leaving a gap in the formulation of corresponding multipole moments.

The significance of the Geroch--Hansen multipole moments in general relativity lies in their ability to characterise a spacetime. Beig and Simon \cite{beigMultipoleExpansionStationary1981}, as well as Kundu \cite{kunduAnalyticityStationaryGravitational1981}, established that the multipole moments uniquely determine the local structure near spatial infinity ($i^0$) in vacuum spacetimes. Simon \cite{simonMultipoleExpansionStationary1984} extended this result to the electrovacuum case, where the matter multipole moments play a critical role. For example, the Kerr and Kerr--Newman spacetimes share identical mass and angular momentum multipole moments but differ in their electric and magnetic multipole moments. This limitation is evident in Mayerson's construction \cite{mayersonGravitationalMultipolesGeneral2023}. Constructing matter potentials for more general spacetimes and evaluating whether the resulting multipole moments characterise the local structure near $i^0$ would be a critical step forward.

\begin{appendices}
\section{Symmetric trace-free tensors}\label{app:STF}

In this appendix, we briefly discuss how to construct the symmetric trace-free part of a tensor. Let $h$ be a metric on a three-dimensional Riemannian manifold $S$. Given a covariant $k$-tensor field $T$, we first take the symmetric part $T^S$ using
\begin{equation}\label{eq:Spart}
    T^S(X_1,\dots, X_k) = \sum_{\sigma\in S_k} T(X_{\sigma(1)},\dots, X_{\sigma(k)}),
\end{equation}
for $X_1, \dots, X_k\in \mathfrak{X}(S)$ and where $S_k$ is the set of permutations on $\{1,\dots, k\}$. Consequently, we take the trace-free part of $T^{S}$. Then we want to find a symmetric covariant $(k-2)$-tensor field $\tilde{T}$ such that $T^S+h\tilde{T}$ is trace-free, where $h\tilde{T}$ is symmetric product of the tensors $h$ and $\tilde{T}$. It turns out that there is an explicit expression for $\tilde{T}$, most easily expressed in coordinates via \cite{blanchetRadiativeGravitationalFields1986}
\begin{equation}\label{eq:STFpart}
    \tilde{T}_{i_3\dots i_k} = \sum_{m=1}^{\left\lfloor \frac{k}{2}\right\rfloor} A^k_m h_{(i_3i_4}\cdots h_{i_{2m-1}i_{2m}} T^S_{i_{2m+1}\dots i_k)j_1\dots j_{2m}} h^{j_1j_2}\cdots h^{j_{2m-1}j_{2m}},
\end{equation}
with
\begin{equation}\label{eq:STFcoef}
    A^k_m =\frac{(-1)^m k! (2k-2m-1)!!}{2^mm! (k-2m)! (2k-1)!!}.
\end{equation}
Here, the round brackets in the expression for $\tilde{T}$ mean that we take the symmetric part.

We end with a few observations that make life easier when taking symmetric trace-free parts:
\begin{itemize}
    \item For a covariant tensor field $T$, we have
    \begin{equation}\label{eq:tensorprodmetricSTF}
      (T\otimes h)^{STF}=0.
    \end{equation}
    This follows from the bare construction of $\tilde{T}$ above.
    \item For covariant tensor fields $R$ and $T$, we have
    \begin{equation}\label{eq:tensorprodSTF}
      \left(R\otimes T^{STF}\right)^{STF} = \left(R^{STF}\otimes T\right)^{STF} = \left(R\otimes T\right)^{STF}.
    \end{equation}
    This follows from the construction of $T^{STF}$ and \eqref{eq:tensorprodmetricSTF}.
    \item Let $D$ be the total covariant derivative. For a covariant tensor field $T$, we have
    \begin{equation}\label{eq:covdevSTF}
      \left(D\left(T^{STF}\right)\right)^{STF} = (DT)^{STF}.
    \end{equation}
    This follows from the construction of $T^{STF}$, metric-compatibility of the Levi-Civita connection and \eqref{eq:tensorprodmetricSTF}.
\end{itemize}
\end{appendices}

\bibliography{thesis.bib}

\providecommand{\bysame}{\leavevmode\hbox to3em{\hrulefill}\thinspace}
\providecommand{\MR}{\relax\ifhmode\unskip\space\fi MR }
\providecommand{\MRhref}[2]{%
  \href{http://www.ams.org/mathscinet-getitem?mr=#1}{#2}
}
\providecommand{\href}[2]{#2}
\begin{thebibliography}{10}

\bibitem{babakScienceSpacebasedInterferometer2017}
Stanislav Babak, Jonathan Gair, Alberto Sesana, Enrico Barausse, Carlos~F.
  Sopuerta, Christopher P.~L. Berry, Emanuele Berti, Pau {Amaro-Seoane},
  Antoine Petiteau, and Antoine Klein, \emph{Science with the space-based
  interferometer {{LISA}}. {{V}}. {{Extreme}} mass-ratio inspirals}, Phys. Rev.
  D \textbf{95} (2017), no.~10, 103012.

\bibitem{backdahlExplicitMultipoleMoments2005}
Thomas B{\"a}ckdahl and Magnus Herberthson, \emph{Explicit multipole moments of
  stationary axisymmetric spacetimes}, Class. Quantum Grav. \textbf{22} (2005),
  no.~17, 3585--3894.

\bibitem{barackUsingLISAExtrememassratio2007}
Leor Barack and Curt Cutler, \emph{Using {{LISA}} extreme-mass-ratio inspiral
  sources to test off-{{Kerr}} deviations in the geometry of massive black
  holes}, Phys. Rev. D \textbf{75} (2007), no.~4, 042003.

\bibitem{beigMultipoleExpansionGeneral1981}
R.~Beig, \emph{The multipole expansion in general relativity}, Acta Physica
  Austriaca \textbf{53} (1981), no.~4, 249--270.

\bibitem{beigMultipoleExpansionStationary1981}
R.~Beig and W.~Simon, \emph{On the multipole expansion for stationary
  space-times}, Proc. R. Soc. Lond. A \textbf{376} (1981), no.~1765, 333--341.

\bibitem{blanchetRadiativeGravitationalFields1986}
Luc Blanchet and Thibault Damour, \emph{Radiative gravitational fields in
  general relativity {{I}}. general structure of the field outside the source},
  Phil. Trans. Roy. Soc. Lond. A \textbf{320} (1986), 379--430.

\bibitem{cardosoTestingBlackHole2016}
Vitor Cardoso and Leonardo Gualtieri, \emph{Testing the black hole `no-hair'
  hypothesis}, Class. Quantum Grav. \textbf{33} (2016), no.~17, 174001.

\bibitem{carrollSpacetimeGeometryIntroduction2019}
Sean~M. Carroll, \emph{Spacetime and {{Geometry}}: {{An Introduction}} to
  {{General Relativity}}}, Cambridge University Press, July 2019.

\bibitem{chruscielStructureSpatialInfinity1989}
Piotr~T. Chru{\'s}ciel, \emph{On the structure of spatial infinity. {{I}}.
  {{The Geroch}} structure}, Journal of Mathematical Physics \textbf{30}
  (1989), no.~9, 2090--2093.

\bibitem{Compere2018gravitational}
Geoffrey Comp\`ere, R.~Oliveri, and A.~Seraj, \emph{{Gravitational multipole
  moments from Noether charges}}, JHEP \textbf{05} (2018), 054.

\bibitem{fodorMultipoleMomentsAxisymmetric1989}
G.~Fodor, C.~Hoenselaers, and Z.~Perj{\'e}s, \emph{Multipole moments of
  axisymmetric systems in relativity}, Journal of Mathematical Physics
  \textbf{30} (1989), no.~10, 2252--2257.

\bibitem{garfinkleRicciFalloffStatic1997}
David Garfinkle and Steven~G Harris, \emph{Ricci fall-off in static and
  stationary, globally hyperbolic, non-singular spacetimes}, Class. Quantum
  Grav. \textbf{14} (1997), no.~1, 139--151.

\bibitem{gerochMultipoleMomentsFlat1970}
Robert Geroch, \emph{Multipole {{Moments}}. {{I}}. {{Flat Space}}}, Journal of
  Mathematical Physics \textbf{11} (1970), no.~6, 1955--1961.

\bibitem{gerochMultipoleMomentsII1970}
\bysame, \emph{Multipole {{Moments}}. {{II}}. {{Curved Space}}}, J. Math. Phys.
  \textbf{11} (1970), no.~8, 2580--2588.

\bibitem{gerochMethodGeneratingSolutions1971}
\bysame, \emph{A {{Method}} for {{Generating Solutions}} of {{Einstein}}'s
  {{Equations}}}, Journal of Mathematical Physics \textbf{12} (1971), no.~6,
  918--924.

\bibitem{gurselMultipoleMomentsStationary1983}
Yekta G{\"u}rsel, \emph{Multipole moments for stationary systems: {{The}}
  equivalence of the {{Geroch-Hansen}} formulation and the {{Thorne}}
  formulation}, Gen Relat Gravit \textbf{15} (1983), no.~8, 737--754.

\bibitem{hansenMultipoleMomentsStationary1974}
Richard~O. Hansen, \emph{Multipole moments of stationary space-times}, J. Math.
  Phys. \textbf{15} (1974), no.~1, 46--52.

\bibitem{harrisConformallyStationarySpacetimes1992}
Steven~G. Harris, \emph{Conformally stationary spacetimes}, Class. Quantum
  Grav. \textbf{9} (1992), no.~7, 1823--1827.

\bibitem{janisStructureGravitationalSources1965}
A.~I. Janis and E.~T. Newman, \emph{Structure of {{Gravitational Sources}}},
  Journal of Mathematical Physics \textbf{6} (1965), no.~6, 902--914.

\bibitem{johannsenTestingNoHairTheorem2010}
Tim Johannsen and Dimitrios Psaltis, \emph{Testing the {{No-Hair Theorem}} with
  {{Observations}} in the {{Electromagnetic Spectrum}}: {{I}}. {{Properties}}
  of a {{Quasi-Kerr Spacetime}}}, ApJ \textbf{716} (2010), no.~1, 187.

\bibitem{kunduAnalyticityStationaryGravitational1981}
P.~Kundu, \emph{On the analyticity of stationary gravitational fields at
  spatial infinity}, Journal of Mathematical Physics \textbf{22} (1981), no.~9,
  2006--2011.

\bibitem{lambMultipoleMomentsEinstein1966}
D.~J. Lamb, \emph{Multipole {{Moments}} in {{Einstein}}'s {{Gravitational
  Theory}}}, Journal of Mathematical Physics \textbf{7} (1966), no.~3,
  458--463.

\bibitem{landsmanFoundationsGeneralRelativity2021}
Klaas Landsman, \emph{Foundations of {{General Relativity}}}, Radboud
  University Press, Nijmegen, 2021.

\bibitem{leeIntroductionRiemannianManifolds2018}
John~M. Lee, \emph{Introduction to {{Riemannian Manifolds}}}, 2 ed., Graduate
  {{Texts}} in {{Mathemathics}}, vol. 176, Springer International Publishing
  AG, 2018.

\bibitem{liGeneralizationRyanTheorem2008}
Chao Li and Geoffrey Lovelace, \emph{Generalization of {{Ryan}}'s theorem:
  {{Probing}} tidal coupling with gravitational waves from nearly circular,
  nearly equatorial, extreme-mass-ratio inspirals}, Phys. Rev. D \textbf{77}
  (2008), no.~6, 064022.

\bibitem{mayersonGravitationalMultipolesGeneral2023}
Daniel~R. Mayerson, \emph{Gravitational multipoles in general stationary
  spacetimes}, SciPost Phys. \textbf{15} (2023), no.~4, 154.

\bibitem{minguzziLorentzianCausalityTheory2019}
Ettore Minguzzi, \emph{Lorentzian causality theory}, Living Rev Relativ
  \textbf{22} (2019), no.~1, 3.

\bibitem{moiseAffineStructures3Manifolds1952}
Edwin~E. Moise, \emph{Affine {{Structures}} in 3-{{Manifolds}}: {{V}}. {{The
  Triangulation Theorem}} and {{Hauptvermutung}}}, Annals of Mathematics
  \textbf{56} (1952), no.~1, 96--114.

\bibitem{munkresTopology2014}
James~R. Munkres, 1930, \emph{Topology}, pearson new international edition,
  second edition ed., Pearson Custom Library, Pearson, Harlow, Essex, 2014.

\bibitem{penroseAsymptoticPropertiesFields1963}
Roger Penrose, \emph{Asymptotic {{Properties}} of {{Fields}} and
  {{Space-Times}}}, Phys. Rev. Lett. \textbf{10} (1963), no.~2, 66--68.

\bibitem{poissonRelativistToolkitMathematics2004}
Eric Poisson, \emph{A {{Relativist}}'s {{Toolkit}}: {{The Mathematics}} of
  {{Black-Hole Mechanics}}}, Cambridge University Press, Cambridge, 2004.

\bibitem{poissonGravityNewtonianPostNewtonian2014}
Eric Poisson and Clifford~M. Will, \emph{Gravity: {{Newtonian}},
  {{Post-Newtonian}}, {{Relativistic}}}, Cambridge University Press, May 2014.

\bibitem{ryanGravitationalWavesInspiral1995}
Fintan~D. Ryan, \emph{Gravitational waves from the inspiral of a compact object
  into a massive, axisymmetric body with arbitrary multipole moments}, Phys.
  Rev. D \textbf{52} (1995), no.~10, 5707--5718.

\bibitem{ryanAccuracyEstimatingMultipole1997}
\bysame, \emph{Accuracy of estimating the multipole moments of a massive body
  from the gravitational waves of a binary inspiral}, Phys. Rev. D \textbf{56}
  (1997), no.~4, 1845--1855.

\bibitem{simonMultipoleExpansionStationary1984}
Walter Simon, \emph{The multipole expansion of stationary
  {{Einstein}}--{{Maxwell}} fields}, Journal of Mathematical Physics
  \textbf{25} (1984), no.~4, 1035--1038.

\bibitem{thorneMultipoleExpansionsGravitational1980}
Kip~S. Thorne, \emph{Multipole expansions of gravitational radiation}, Rev.
  Mod. Phys. \textbf{52} (1980), no.~2, 299--339.

\bibitem{waldGeneralRelativity1984}
Robert~M. Wald, \emph{General {{Relativity}}}, The University of Chicago Press,
  1984.

\bibitem{willmerMultipoleMomentsGeneral1981}
D.~G. Willmer, \emph{Multipole moments in general relativity}, J. Phys. A:
  Math. Gen. \textbf{14} (1981), no.~10, 2653.

\end{thebibliography}

\end{document}